
\documentclass[letterpaper, 10 pt, conference]{ieeeconf}  

\IEEEoverridecommandlockouts                              

\overrideIEEEmargins                                      



\usepackage{amsmath}
\usepackage{amssymb}
\usepackage{amsfonts}
\usepackage{graphicx}
\usepackage{enumerate}
\usepackage{mathrsfs}
\usepackage{float}
\usepackage{cite}
\usepackage{tikz}
\usetikzlibrary{matrix,arrows,decorations.pathmorphing}
\usepackage{verbatim}
\usepackage{mathtools}
\usepackage{caption}
\usepackage{subcaption}
\usetikzlibrary{arrows}

\usepackage{tabularx}

\DeclareMathOperator*{\sgn}{\mathsf{sgn}}
\DeclareMathOperator*{\Dg}{\mathsf{diag}}



\title{\LARGE \bf
Interplay Between Homophily-Based Appraisal Dynamics and Influence-Based Opinion Dynamics: Modeling and Analysis}

\author{Fangzhou Liu$^{1}$, Shaoxuan Cui$^{1}$, Wenjun Mei$^{2*}$, Florian D\"orfler$^{2}$ and Martin Buss$^{1}$
\thanks{This work was supported by the joint Sino-German Research Project, which is funded through the German Research Foundation (DFG) and the National Science Foundation China (NSFC) 61761136005.}
\thanks{$^{1}$F. Liu, S. Cui, and M. Buss are with the Chair of Automatic Control Engineering, Technical University of Munich, Munich, 80333 Germany {\tt\small \{fangzhou.liu, shaoxuan.cui, mb\}@tum.de}}
\thanks{$^{2}$W. Mei and F. D\"orfler are with the Automatic Control Laboratory, ETH, 8092 Zurich, Switzerland {\tt\small \{wmei, dorfler\}@ethz.ch}}%
}

\begin{document}

\maketitle
\thispagestyle{empty}
\pagestyle{empty}

\newtheorem{remark}{Remark}
\newtheorem{lemma}{Lemma}
\newtheorem{thm}{Theorem}
\newtheorem{example}{Example}
\newtheorem{definition}{Definition}
\newtheorem{prop}{Proposition}

\begin{abstract}
 In social systems, the evolution of interpersonal appraisals and individual opinions are not independent processes but intertwine with each other. Despite extensive studies on both opinion dynamics and appraisal dynamics separately, no previous work has ever combined these two processes together. In this paper, we propose a novel and intuitive model on the interplay between homophily-based appraisal dynamics and influence-based opinion dynamics. We assume that individuals' opinions are updated via the influence network constructed from their interpersonal appraisals, which are in turn updated based on the individual opinions via the homophily mechanism. By theoretical analysis, we characterize the set of equilibria and some transient behavior of our model. Moreover, we establish the equivalence among the convergence of the appraisal network to social balance, the modulus consensus of individual opinions, and the non-vanishing appraisals. Monte Carlo validations further show that the non-vanishing appraisals condition holds for generic initial conditions. Compared with previous works that explain the emergence of social balance via person-to-person homophily mechanism, our model provides an alternative explanation in terms of the person-to-entity homophily mechanism. In addition, our model also describes how individuals' opinions on multiple irrelevant issues become correlated and converge to modulus consensus over time-varying influence networks. 
\end{abstract}
\begin{keywords}
structural balance, appraisal dynamics, opinion dynamics, network multi-agent systems
\end{keywords}

\section{INTRODUCTION}

\paragraph{Motivation and problem description} 
In social sciences, it has been extensively studied, as opinion dynamics, how individual opinions are shaped by social influences. However, few mathematical models have been established to explain how individual opinions react to interpersonal relations. In fact, such reactions are frequently observed and even being intentionally made use of. For example, politicians and news media sometimes throw certain issues to the public to generate conflicts and divisions of our society.

In this paper, we propose a novel model on the interplay between opinion dynamics and interpersonal relations, and investigate its consequences. We assume that the interpersonal appraisals, i.e., how much people like or dislike each other, are determined by the opinion homophily mechanism. That is, individuals holding similar opinions tend to be friendly to each other, and vice versa. In the meanwhile, the interpersonal influences are proportional to the appraisals and construct a time-varying signed influence network, on which individual opinions are iterated via opinion dynamics.


\paragraph{Literature Review} Opinion dynamics on signed influence networks have drawn considerable attention recently~\cite{altafini2012consensus,anton_tac_2016,JL-XC-TB-MAB:17,liu_tac_2019,GS-CA-JSB:19}. Its dynamical behavior depends on whether the signed influence network satisfies \emph{social balance}~\cite{altafini2012consensus}, i.e., whether it can be partitioned into two antagonistic factions~\cite{heider1967attitudes}, where the social links within each faction are all non-negative and the social links between the two factions are all non-positive. According to \cite{altafini2012consensus}, individual opinions evolving on a connected influence network satisfying social balance converge to \emph{bipartite consensus}. That is, individuals in one faction reach consensus on some opinion $a$, while individuals in the other faction reach consensus on $-a$. Specially, one faction can be empty, which leads to \emph{consensus} of all individual's opinions. If the signed influence network does not satisfy social balance, everyone's opinion converges to zero.

Recently, various models have been proposed to explain how the interpersonal appraisal networks evolve to social balance, e.g., see~\cite{TA-PLK-SR:05,rijt_jms_2011,marvel2011continuous,PJ-NEF-FB:13n,traag2013dynamical,mei2019dynamic}. Some of them are based on the homophily mechanism, see~\cite{traag2013dynamical,mei2019dynamic}. However, the homophily mechanism in these papers is the person-to-person homophily, i.e., individuals holding similar appraisals of the others tend to be friendly to each other. Therefore, the models in~\cite{traag2013dynamical,mei2019dynamic} are self-driven dynamics of the appraisal networks, with no opinion dynamics involved. As pointed out by Heider~\cite{heider1967attitudes}, the person-to-entity homophily could also play a role in shaping the interpersonal appraisals. This partly motivates our paper, where ``entity'' refers to individuals opinions on certain issues irrelevant to appraisals.


\paragraph{Contributions} 
To the best of our knowledge, our model is the first one that studies the interplay between opinion dynamics and appraisal dynamics in social systems. Theoretical analysis shows that our proposed model is well-defined and the interpersonal appraisals and individual opinions enjoy bounded behaviors. We further characterize the set of equilibria of our model and their local stabilities. Moreover, we establish the equivalence among the convergence of the appraisal network to social balance, the modulus consensus of individual opinions, and the non-vanishing appraisals. Apart from the theoretical results, numerical study shows the validity of the non-vanishing appraisals condition for almost all generic initial conditions. In terms of sociological interpretations, our model explains the emergence of social balance via the person-to-entity homophily combined with the evolution of individual opinions. Our model also describes how individuals' opinions on irrelevant issues eventually become correlated due to the formation of antagonistic factions.

\paragraph{Organization} The remainder of this paper is organized as follows. We introduce notations and necessary definitions in Section II as well as the model description. The main theoretical analysis are presented in Section III. Finally, we conduct the numerical experiments in Section IV to validate our model and theoretical contribution.

\section{Notations, Definitions, and Model Set-Up}
\emph{Notations:} Notations frequently used are defined in Table \ref{tab:notation} and adhere closely to those in~\cite{mei2019dynamic}. In this paper, we denote by $X=(X_{ij})_{n\times n}$ \emph{appraisal matrix} for a group of $n$ individuals. Here $X_{ij}$ denotes individual $i$'s appraisal of $j$, i.e., $X_{ij}>0$ ($X_{ij}<0$ resp.) if $i$ likes (dislikes resp.) $j$. $X_{ij}=0$ is $i$ does not know $j$ or holds a neutral attitude towards $j$. The appraisal matrix $X$ defines a weighted and directed graph $G(X)$, referred to as the \emph{appraisal network}.

\begin{table}[]
\caption{Notations frequently used in this paper}
\label{tab:notation}
\begin{tabular}{@{}ll@{}}
\hline
$\mathbf{1}_{n}$ &The all ones $n \times 1$ vector  \\ 
$\mathbb{R}$ ($\mathbb{Z}_{\geq 0}$) & Set of real numbers (non-negative integers)  \\
$\mathbf{0}_{m \times n}$ & A zero matrix with $m \times n$ dimensions\\
$|X|$ & Entry-wise absolute value of matrix $X$ \\
 $\sgn(X)$ & Entry-wise sign of $X$, whose entry at $i$th row \\
& and $j$th column is $\sgn(X_{ij})$\\
$|X|_{max} $ $ (|X|_{min})$ & $\max_{i,j}|X_{ij}|$ ($\min_{i,j}|X_{ij}|$) \\
$X_{i*}$ ($X_{*i}$) & The ith row (column) vector of $X$  \\
$A<B$ & $A_{ij}<B_{ij}$ for any $i$ and $j$. \\ 
$\Dg(x)$ & The diagonal matrix, whose diagonal elements \\
&are the elements of the vector $x$.\\
\hline
\end{tabular}
\end{table}
\begin{definition}[Social balance \cite{heider1967attitudes}]
An appraisal network $G(X)$ satisfies social balance if $X_{ii} > 0, \forall i \in \{1, \ldots, n\}$ and $\sgn(X_{ij}) \sgn(X_{jk}) \sgn(X_{ki}) = 1$, $\forall$ $i,j,k\in \{1, \ldots, n\}$.
\end{definition}

\begin{lemma}[\hspace{-0.05em}\cite{mei2019dynamic}] 
	For any $X \in \mathbb{R}^{n \times n}$
	such that all of its entries are non-zero, $G(X)$ satisfies
	social balance if and only if it satisfies  $X_{ii} > 0$ for any $i \in \{1, \dots, n\}$ and
	$\sgn(X_{i*}) = \pm \sgn(X_{j*})$, for all $i$, $j \in \{1, \dots, n\}$.
\end{lemma}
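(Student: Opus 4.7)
The plan is to exploit a simple scalar parameterization: the condition $\sgn(X_{i*}) = \pm \sgn(X_{j*})$ is equivalent to the existence of a sign $s_{ij} \in \{-1,+1\}$ with $\sgn(X_{ik}) = s_{ij}\sgn(X_{jk})$ for every $k$. The nonvanishing hypothesis is essential, since it guarantees that every $\sgn$ lies in $\{-1,+1\}$ and can be inverted/cancelled. My strategy is to translate between the triple-product formulation of social balance and this row-wise formulation by way of two intermediate facts, sign symmetry $\sgn(X_{ij}) = \sgn(X_{ji})$ and a multiplicative chain rule $s_{ij}s_{jk} = s_{ik}$.

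For the forward direction, I would first specialize the triple condition to the degenerate triple $(i,i,k)$, giving $\sgn(X_{ii})\sgn(X_{ik})\sgn(X_{ki}) = 1$; combining with $X_{ii} > 0$ yields sign symmetry $\sgn(X_{ik}) = \sgn(X_{ki})$. The triple condition on a general $(i,j,k)$ can then be rewritten as $\sgn(X_{ij}) = \sgn(X_{ik})\sgn(X_{jk})$. Since the left-hand side does not depend on $k$, neither does the right, and rearranging gives $\sgn(X_{ik}) = \sgn(X_{ij})\sgn(X_{jk})$ for every $k$; i.e., $\sgn(X_{i*}) = \sgn(X_{ij})\sgn(X_{j*})$, which is exactly the claimed $\pm$ relation on rows.

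For the reverse direction, write $\sgn(X_{i*}) = s_{ij}\sgn(X_{j*})$ and evaluate the $j$-th coordinate to identify $s_{ij} = \sgn(X_{ij})$ (here $X_{jj} > 0$ is essential). Chaining $\sgn(X_{i*}) = s_{ij}\sgn(X_{j*}) = s_{ij}s_{jk}\sgn(X_{k*})$ against $\sgn(X_{i*}) = s_{ik}\sgn(X_{k*})$ forces $s_{ik} = s_{ij}s_{jk}$. Reading this at coordinate $i$ (using $X_{ii} > 0$ once more so that $\sgn(X_{ii}) = 1$) gives $\sgn(X_{ki}) = s_{ki}$, and the triple product is $s_{ij}s_{jk}s_{ki} = s_{ik}s_{ki}$, which equals $1$ since $s_{ij}s_{ji} = 1$ is forced by substituting the two $\pm$ relations into each other.

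The one mild subtlety — and the place where care is needed — is to be clear that the triple condition in the definition of social balance is applied at both degenerate and nondegenerate index tuples: the degenerate case produces sign symmetry, and the nondegenerate case then produces the $k$-independence that is the whole content of the lemma. Everything else is bookkeeping with $\pm 1$'s, which is routine once nonvanishing of all $X_{ij}$ is in force.
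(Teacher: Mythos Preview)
Your proof is correct. The paper does not actually prove this lemma itself; it states the result with a citation to~\cite{mei2019dynamic} and provides no argument, so there is nothing to compare your approach against. Your self-contained argument --- extracting sign symmetry from the degenerate triple $(i,i,k)$, then reading the general triple condition as $\sgn(X_{ik}) = \sgn(X_{ij})\sgn(X_{jk})$ for the forward direction, and for the converse identifying $s_{ij} = \sgn(X_{ij})$ via the $j$-th coordinate and chaining to get $s_{ij}s_{jk}s_{ki} = s_{ik}s_{ki} = 1$ --- is a clean and complete proof of the equivalence.
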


\subsection{Model Description}
Our novel model of the interplay between homophily-based appraisal dynamics and influence-based opinion dynamics is formally presented as follows.  
\begin{definition}[The interplay model]
	Let $Y(t) \in \mathbb{R}^{n \times m}$ be the opinion matrix of $n$ agents towards $m$ issues at time $t$ and its entry $Y_{ij}(t)$ represents agent $i$'s opinion on issue $j$. Let $X(t)$ be the appraisal matrix at time $t$. Given the initial condition $Y(0)=Y_0\in\mathcal{Y}$, where the set $\mathcal{Y}= \{Y | Y(t) \in \mathcal{S}_{nz-row}, \text{ for any }t\geq 0,\text{ with }Y(0)=Y \}$, the interplay between the appraisal matrix $X(t)$ and the opinion matrix $Y(t)$ is given by the following dynamics:
	\begin{equation}
	\label{eq:X}
	X(t+1)=\Dg(|Y(t)|\mathbf{1}_{m})^{-1}Y(t)Y^{\top}(t),\\
	\end{equation}
	\begin{equation}
	\label{eq:W}
	W(t+1)=\Dg(|X(t+1)|\mathbf{1}_{n})^{-1}X(t+1),\\
	\end{equation}
	\begin{equation}
	\label{eq:Y}
	Y(t+1)=W(t+1)Y(t),
	\end{equation}
	or, equivalently, in the entry-wise form:
	\begin{equation}
	\label{eq:Xentry}
	X(t+1)_{ij}=\frac{\sum_{k=1}^{m} Y_{ik}(t)Y_{jk}(t)}{||Y_{i*}(t)||_{1}},\\
	\end{equation}
	\begin{equation}
	\label{eq:Wentry}
	W(t+1)_{ij}=\frac{ X_{ik}(t+1)}{||X^+_{i*}(t)||_{1}},\\
	\end{equation}
	\begin{equation}
	\label{eq:Yentry}
	Y(t+1)_{ij}=\sum_{k=1}^{n}W_{ik}(t+1)Y_{kj}(t). 
	\end{equation}
\end{definition}
The dynamical system~\eqref{eq:X}-\eqref{eq:Y} can be understood as the following iteration process: At each time $t+1$, via the person-to-opinion homophily mechanism, individuals form their interpersonal appraisals $X(t+1)$ based on the previous opinions $Y(t)$. Then a signed influence matrix $W(t+1)$ is constructed proportionally to the appraisal matrix $X(t+1)$. Finally, individuals update their opinions via the signed influence matrix $W(t+1)$, obeying the opinion dynamics model in~\cite{altafini2012consensus}. Note that $| \Dg(|Y(t)|1_m)^{-1}Y(t)|$ and $ |\Dg(|X(t+1)|1_n)^{-1}X(t+1)|$ are both row stochastic. This kind of normalization terms have been widely adopted in both appraisal dynamics \cite{mei2019dynamic} and opinion dynamics with antagonistic relations~\cite{hendrickx2014lifting,proskurnikov2017modulus,xia2015structural}. In addition, the assumption $Y(0)\in \mathcal{Y}$ implies that, starting with the initial condition $Y(0)$, at any time $t$, every agent $i$ has an non-zero opinion on at least one of the $m$ issues.
	
By combining ~\eqref{eq:X}-\eqref{eq:Y}, we can obtain the following model in the form of the opinion dynamics:
\begin{equation} \label{eq:model2}
	Y(t+1)=\Dg(|Y(t)Y^{\top}(t)|\mathbf{1}_{n})^{-1}Y(t)Y^{\top}(t)Y(t).
\end{equation}
The system \eqref{eq:model2} concludes the interplay between homophily-based appraisal dynamics \eqref{eq:X} and influence-based opinion dynamics \eqref{eq:Y}. Obviously, $\mathcal{Y}$ is the domain of our model in~\eqref{eq:model2}. For the convenience of presentation, the time step $t$ can be omitted in case of no ambiguity, i.e., $X(t+1)$, $Y(t)$, and $Y(t+1)$ are denoted as $X^+$, $Y$, and $Y^+$, respectively.

\section{Theoretical Analysis}
In this section, we provide theoretical analysis regarding the interplay model~\eqref{eq:X}-\eqref{eq:Y}. The following lemma presents some finite-time properties of the interplay model.

\begin{lemma}[Finite-time behavior] 
	\label{Lm:lm2}
	Consider system~\eqref{eq:model2} and define $f(Y)=\Dg (|YY^{\top}|\mathbf{1}_{n})^{-1}YY^{\top}Y$. For any $Y_{0}\in \mathcal{Y}$, the following statements hold:
	
	\begin{enumerate}
		\item [i)] the map $f$ is well-defined for any $Y_{0}\in \mathcal{Y}$;
		
		\item [ii)]  the solution $Y(t)$, $t \in \mathbb{Z}_{\geq 0},$ to~\eqref{eq:model2} with the initial condition
		$Y(0) = Y_{0}$ exists and is unique;
		
		\item [iii)]  the max norm of $Y(t)$ satisfies
		$|Y(t+1)|_{\max} \leq |Y(t)|_{\max} \leq |Y(0)|_{\max};$
		
		\item [iv)]  for any $c> 0$, the trajectory $cY(t)$ is the solution to~\eqref{eq:model2}
		from initial condition $Y(0) = c Y_{0}$.
	
		\item [v)] $|X(t+1)|$ is upper bounded by $|Y(t)|_{\max}$ for any $t \in \mathbb{Z}_{\geq 0}$.
	\end{enumerate}
\end{lemma}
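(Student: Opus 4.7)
The plan is to verify the five claims in order, relying on two observations noted immediately after the model definition: both $|\Dg(|Y|\mathbf{1}_m)^{-1}Y|$ and $|W(t+1)|$ are row-stochastic, and the definition of $\mathcal{Y}$ guarantees every forward trajectory remains in $\mathcal{S}_{nz-row}$, so the diagonal normalizations stay invertible.

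For part (i), the only possible obstruction to well-definedness of $f$ is that $\Dg(|YY^{\top}|\mathbf{1}_n)$ be singular. Its $i$-th diagonal entry is $\sum_j |(YY^{\top})_{ij}| \geq |(YY^{\top})_{ii}| = \|Y_{i*}\|_2^2$, which is strictly positive whenever $Y_{i*}$ is non-zero. Since $Y_0\in\mathcal{Y}$ means precisely that every $Y(t)$ lies in $\mathcal{S}_{nz-row}$, the map $f$ is well-defined all along the orbit. Part (ii) then follows immediately, because a well-defined deterministic recursion admits a unique forward orbit by induction on $t$.

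For parts (iii) and (v), I would use entry-wise triangle-inequality estimates. Writing $Y^+_{ij}=\sum_k W^+_{ik}Y_{kj}$ and using $\sum_k|W^+_{ik}|=1$ gives $|Y^+_{ij}|\leq|Y|_{\max}$; induction on $t$ produces the full chain in (iii). For (v), direct bounding in~\eqref{eq:Xentry} yields
\[
|X^+_{ij}| \;\leq\; \frac{\sum_k |Y_{ik}||Y_{jk}|}{\|Y_{i*}\|_1} \;\leq\; \frac{|Y|_{\max}\sum_k|Y_{ik}|}{\|Y_{i*}\|_1} \;=\; |Y|_{\max}.
\]

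For part (iv), I would check homogeneity step by step. If $\tilde Y(0)=cY(0)$ with $c>0$, substitution into~\eqref{eq:X} gives $\tilde X(1)=cX(1)$, since the factor $c$ in the denominator $|\tilde Y(0)|\mathbf{1}_m = c|Y(0)|\mathbf{1}_m$ cancels one power of $c^2$ in $\tilde Y(0)\tilde Y(0)^{\top}$. Then~\eqref{eq:W} cancels the remaining $c$, so $\tilde W(1)=W(1)$, giving $\tilde Y(1)=W(1)\,cY(0)=cY(1)$, and induction on $t$ closes the argument. The entire lemma is essentially algebraic; the only subtle step is (i)--(ii), where one must invoke the definition of $\mathcal{Y}$ itself as exactly the set of initial conditions that guarantees the normalizations remain invertible for all time.
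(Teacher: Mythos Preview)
Your proof is correct and follows essentially the same route as the paper: the same entry-wise triangle-inequality bounds for (iii) and (v), the same homogeneity substitution for (iv), and the same appeal to $Y(t)\in\mathcal{S}_{nz-row}$ for (i)--(ii). Your treatment of (i) is in fact slightly more explicit than the paper's, since you exhibit the lower bound $\sum_j|(YY^\top)_{ij}|\geq(YY^\top)_{ii}=\|Y_{i*}\|_2^2>0$ rather than just asserting $X^+\in\mathcal{S}_{nz-row}$.
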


\begin{proof}
	According to the equation \eqref{eq:Wentry} and \eqref{eq:Yentry}, $Y^+=f(Y)$ is well-defined as long as $X^+\in \mathcal{S}_{nz-row}$, which is naturally true by dynamics~\eqref{eq:X}, since $Y(t) \in \mathcal{S}_{nz-row}$ for any $t \in \mathbb{Z}_{\geq 0}$ by assumption. 
	This fact leads to statement i).
	
	Statement ii) is the direct consequence of i).
	
	By the system equations~\eqref{eq:W} and~\eqref{eq:Y}, we have $|Y^+_{ij}|=\frac{|\sum_{k=1}^{n} X^+_{ik}Y_{kj}|}{||X^+_{i*}||_{1}} \leq \frac{\sum_{k=1}^{n} |X^+_{ik}||Y_{kj}|}{||X^+_{i*}||_{1}}\leq \max_{k}|Y_{kj}|\leq |Y(t)|_{\max}.$ It implies statement iii) immediately.
	
	Then, statement iv) is
	obtained by replacing $Y(t)$ with $cY(t)$ on the right-hand side
	of~\eqref{eq:model2}, 
    i.e., let $\hat{Y}(0) = c Y_{0}$. We obtain 
	\begin{equation*}
	  \hat{Y}(1) = \Dg (|c^2Y_{0} Y_{0}^{\top}|\mathbf{1}_{n})^{-1}cY_{0} (cY_{0})^{\top}cY_{0}=cY(1).
	\end{equation*}
	For any $t \geq 1$, $\hat{Y}(t)=cY(t)$ holds true for the same calculation.

    Now we prove that statement v). According to equation~\eqref{eq:X},
	\begin{equation*}
	|X^+_{ij}|=\frac{\sum_{k=1}^{m}| Y_{ik}Y_{jk}|}{||Y_{i*}||_{1}} \leq \frac{\sum_{k=1}^{m} |Y_{ik}||Y_{jk}|}{||Y_{i*}||_{1}}\leq \max_{k}|Y_{jk}|.
	\end{equation*}
	It yields that $|X_{ij}(t+1)| \leq |Y(t)|_{\max}$ for any $t \in \mathbb{Z}_{\geq 0}$.
\end{proof}
\begin{remark}
 
 Specifically, in single-issue case ($m=1$), $\mathcal{S}_{nz-row}$ is an invariance set of the map $f$. For multi-issue case ($m>1$),  Monte Carlo validation indicates that, for any $Y$ randomly picked from $\mathcal{S}_{nz-row}$, $f(Y) \in \mathcal{S}_{nz-row}$ holds almost surely, see Section \ref{sec:val} for the simulation set-up and results.
\end{remark}
Before embarking on the main results on the equilibrium and the convergence of the opinion matrix $Y(t)$, we introduce the concept of \emph{non-vanishing} of the appraisal matrix $X(t)$.
\begin{definition}[\hspace{-0.06em}\cite{mei2019dynamic}] \label{Df:df1}
	A time-varying appraisal matrix $X(t)$ satisfies the non-vanishing appraisal condition if 
	${\liminf}_{t\rightarrow \infty}    {\min}_{i,j}|X_{ij}(t)|> 0$. 
\end{definition}

\begin{prop}
	\label{pro:pro1}
	For any initial condition $Y_{0}\in \mathcal{Y}$ such that $X(t)$ satisfies the non-vanishing appraisal condition, define the set $\Upsilon = \{Y(t)\}_{t=0}^{\infty}$. $\Upsilon$ is the invariance set of the map $f$. Moreover, $f$ is continuous on the set $\Upsilon$.
\end{prop}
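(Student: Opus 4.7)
The plan is to dispatch invariance by direct unrolling of the iteration, and to obtain continuity by showing the denominator of the rational map $f$ stays strictly positive along the trajectory.

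For invariance, $\Upsilon = \{Y(t)\}_{t=0}^{\infty}$ is by construction the orbit of $Y_{0}$ under the recursion $Y(t+1)=f(Y(t))$, and statements i)-ii) of Lemma \ref{Lm:lm2} guarantee that $f$ is well-defined at each iterate. Therefore $f(Y(t)) = Y(t+1) \in \Upsilon$ for every $t \geq 0$, which gives $f(\Upsilon)\subseteq \Upsilon$.

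For continuity, I note that $f(Y) = \Dg(|YY^\top|\mathbf{1}_n)^{-1} YY^\top Y$ is a rational function of the entries of $Y$, hence continuous wherever $\Dg(|YY^\top|\mathbf{1}_n)$ is invertible, i.e., wherever $\sum_j |[YY^\top]_{ij}| > 0$ for every row index $i$. The key inequality is
\[
\sum_j |[YY^\top]_{ij}| \;\geq\; [YY^\top]_{ii} \;=\; \|Y_{i*}\|_2^2,
\]
so the denominator is strictly positive whenever every row of $Y$ is non-zero. Since $Y_0\in\mathcal{Y}$ forces $Y(t)\in\mathcal{S}_{nz-row}$ for all $t$, every point of $\Upsilon$ sits in the open region where $f$ is a continuous rational map, yielding pointwise continuity of $f$ on $\Upsilon$.

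The non-vanishing appraisal hypothesis enters when one wants this continuity to extend to the closure of $\Upsilon$, which is the form typically needed for subsequent limit-point arguments. From $X(t+1)_{ii} = \|Y_{i*}(t)\|_2^2/\|Y_{i*}(t)\|_1 \geq \|Y_{i*}(t)\|_2$, the assumption $\liminf_{t\to\infty}\min_{i,j}|X_{ij}(t)| > 0$ yields a uniform lower bound on $\|Y_{i*}(t)\|_2$; combined with the max-norm bound in Lemma \ref{Lm:lm2} iii), this places the closure of $\Upsilon$ inside a compact subset of the region of continuity. The main obstacle is really just bookkeeping: confirming that the diagonal term $\|Y_{i*}\|_2^2$ keeps the row normalization in $f$ from degenerating, which is immediate from the definition of $\mathcal{Y}$ together with the non-vanishing appraisal assumption.
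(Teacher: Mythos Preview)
Your invariance argument matches the paper's exactly: $\Upsilon$ is the forward orbit of $Y_0$, so $f(\Upsilon)\subseteq\Upsilon$ is automatic once $f$ is well-defined along the trajectory.

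For continuity you take a genuinely different route from the paper. The paper argues that every point of $\Upsilon$ is an \emph{isolated} point of $\Upsilon$ (in the subspace topology), and that any map whatsoever is continuous at an isolated point; hence $f\big|_{\Upsilon}$ is continuous. You instead show that $f$ is continuous on an ambient open set containing $\Upsilon$, by bounding the row-normalization denominator below via $\sum_j\big|[YY^{\top}]_{ij}\big|\ge [YY^{\top}]_{ii}=\|Y_{i*}\|_2^2>0$ for $Y\in\mathcal{S}_{nz\text{-}row}$. Your argument is the more useful one for the downstream LaSalle step in Theorem~\ref{thm:sta_mul}, which really needs continuity on a compact set rather than mere subspace continuity on a countable orbit; the paper's isolated-point trick is slicker but buys less.

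Two small corrections. First, $f$ is not literally a rational map because of the entrywise absolute values in the normalization; it is, however, a composition of continuous maps (polynomials, $|\cdot|$, and inversion), so your conclusion stands. Second, in your closure discussion the inequality is reversed: since $\|v\|_1\ge\|v\|_2$ one has $X(t+1)_{ii}=\|Y_{i*}(t)\|_2^2/\|Y_{i*}(t)\|_1\le\|Y_{i*}(t)\|_2$, not $\ge$. Fortunately this is the direction you actually need: a uniform lower bound on $X_{ii}$ then directly forces $\|Y_{i*}\|_2\ge\delta>0$, placing the closure of $\Upsilon$ inside the region where the denominator of $f$ is bounded away from zero.
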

\begin{proof}
    $\Upsilon$ is the invariance set of the map $f$ is equivalent to the statement: pick any $y \in \Upsilon$, $f(y)\in \Upsilon$ holds true.
	Since $X(t)$ satisfies the non-vanishing appraisal condition, $X(t)$ and $Y(t)$ must be defined at any time point $t > 0$. For any $Y(i) \in \Upsilon$, we have $Y^+=Y(i+1)\in \Upsilon$. Thus, $\Upsilon$ is the invariance set of the map $f$.
	
	Now, we show that a map is continuous at any isolated point. Let $X_{0}$ be any isolated point and $h$ be a map such that $Y_{0}=h(X_{0})$. For any $\epsilon>0$ and neighbor set of $Y_{0}$: $B_{Y_{0}}(\epsilon)$, there always exists $\delta>0$ and a neighbor set of $X_{0}$: $B_{X_{0}}(\delta)$ such that $B_{X_{0}}(\delta)=\{X_{0} \}.$ Thus, $f(B_{X_{0}}(\delta))=h(X_{0})=Y_{0}\in B_{Y_{0}}(\epsilon).$ 
	
	In addition, for any $Y(i) \in \Upsilon$, $Y(i)$ is the isolated point in the set $\Upsilon$, because there exists a neighborhood of $Y(i)$ which does not contain any other points of $\Upsilon$. By definition, a map $g : A \rightarrow B$ is continuous iff it is continuous at each point of $A$. Since any map is continuous at the isolated point and $\Upsilon$ contains only isolated points, $f$ is continuous on the set $\Upsilon$.
\end{proof}

\begin{thm}[Equilibrium set]
	\label{thm:fixed point multi}
	Given the opinion dynamics in~\eqref{eq:model2}, for any $Y^*\in \{Y | \quad |X^+|=|\Dg(|Y|\mathbf{1}_{m})^{-1}YY^{\top}|> \mathbf{0} \}$,
	 $Y^*$ is an equilibrium if and only if $Y^*$ is in the form of $Y^*=[a_{1}\rho,a_{2}\rho,...,a_{m}\rho]$, where $a_{i}\in \mathbb{R}$, $i=1, ...,m$, $\rho \in \{\pm 1\}^n$, and $\sum_{i=1}^{m} a_{i}^2\neq 0$.
\end{thm}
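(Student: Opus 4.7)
The plan is to prove the two implications separately. Sufficiency reduces to a direct substitution exploiting the identity $\rho^\top \rho = n$ for any $\rho \in \{\pm 1\}^n$. Necessity is driven by the equality case of the Euclidean triangle inequality in $\mathbb{R}^m$, applied under the hypothesis $|X^+|>0$, which is equivalent to the Gram matrix $Z := Y^{*}(Y^{*})^{\top}$ having every entry nonzero (the diagonal scaling in \eqref{eq:X} is strictly positive).

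For sufficiency, I would write $Y^{*} = \rho a^{\top}$ with $\rho \in \{\pm 1\}^n$ and $a = (a_{1},\dots,a_{m})^{\top} \neq 0$, and record the three one-line identities $Z = \|a\|_{2}^{2}\,\rho\rho^{\top}$, $|Z|\mathbf{1}_{n} = n\|a\|_{2}^{2}\,\mathbf{1}_{n}$, and $ZY^{*} = \|a\|_{2}^{2}\,\rho(\rho^{\top}\rho)a^{\top} = n\|a\|_{2}^{2}\,Y^{*}$. Substituting into $f(Y^{*}) = \Dg(|Z|\mathbf{1}_{n})^{-1} Z Y^{*}$ then returns $Y^{*}$ after cancellation of the factor $n\|a\|_{2}^{2}$.

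For necessity, I rewrite the fixed-point condition row-wise as
\[
Y^{*}_{i*} \;=\; \sum_{j=1}^{n} w_{ij}\, Y^{*}_{j*}, \qquad w_{ij} \;:=\; \frac{Z_{ij}}{\sum_{k=1}^{n} |Z_{ik}|},
\]
so that $\sum_{j} |w_{ij}| = 1$. Taking Euclidean norms in $\mathbb{R}^{m}$ produces the chain
\[
\|Y^{*}_{i*}\|_{2} \;\le\; \sum_{j=1}^{n} |w_{ij}|\,\|Y^{*}_{j*}\|_{2} \;\le\; M, \qquad M \;:=\; \max_{k}\|Y^{*}_{k*}\|_{2}.
\]
Picking an index $i^{*}$ that attains $M$, both inequalities are tight. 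The outer tightness, combined with $w_{i^{*}j} \neq 0$ for every $j$ (guaranteed by $|Z|>0$), forces $\|Y^{*}_{j*}\|_{2} = M$ for all $j$. The inner tightness is the equality case of the Euclidean triangle inequality: it forces the vectors $w_{i^{*}j}Y^{*}_{j*}$ to all be nonnegative scalar multiples of a single unit vector $u \in \mathbb{R}^{m}$; since the weights $w_{i^{*}j}$ are nonzero, each row $Y^{*}_{j*}$ is itself a real scalar multiple of $u$. Combining the two conclusions yields $Y^{*}_{j*} = M\rho_{j}\, u$ for some $\rho_{j} \in \{\pm 1\}$, so $Y^{*} = \rho (Mu)^{\top}$; setting $a := Mu$ produces exactly the claimed form, and $\sum_{i} a_{i}^{2} = M^{2} > 0$ since $|Z|>0$ excludes $M = 0$.

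The main obstacle I anticipate is the careful handling of the two equality conditions at $i^{*}$: exploiting strict convexity of the Euclidean norm to turn tightness of the triangle inequality into collinearity of all rows, and using the non-vanishing hypothesis $|X^{+}|>0$ twice — once to ensure that the signed-stochastic combination centered at $i^{*}$ actually reaches every row, and once to rule out the degenerate scenario $M = 0$ where $Y^{*}$ would vanish identically.
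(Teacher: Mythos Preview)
Your proof is correct, and the sufficiency step is essentially identical to the paper's (direct substitution). For necessity, however, you take a genuinely different route. The paper argues column by column using the scalar triangle inequality: for each issue $j$ it selects the agent $i$ with the largest $|Y^{*}_{ij}|$, and from equality in
\[
|Y^{*}_{ij}| \;=\; \frac{\bigl|\sum_{k} X^{+}_{ik} Y^{*}_{kj}\bigr|}{\|X^{+}_{i*}\|_{1}} \;\le\; \max_{k}|Y^{*}_{kj}|
\]
deduces that every entry of column $j$ has magnitude $\|Y^{*}_{*j}\|_{\infty}$ and that $\sgn(Y^{*}_{*j}) = \pm \sgn(X^{+}_{i*})$, then patches the columns together via the common sign pattern of $X^{+}$. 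Your argument instead works row-wise with the Euclidean norm on $\mathbb{R}^{m}$ and exploits strict convexity: tightness of the vector triangle inequality at the maximal row $i^{*}$ immediately forces all rows to be collinear and of equal norm, delivering the rank-one form $Y^{*}=\rho a^{\top}$ in one stroke. Your approach is cleaner in that it avoids tracking how the maximizing index varies with $j$ and yields the global structure directly; the paper's approach has the minor advantage of using only scalar inequalities and staying closer to the entry-wise formulation of the dynamics.
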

\begin{proof}
	By substituting $Y^*=[a_{1}\rho,a_{2}\rho,...,a_{m}\rho]$ into~\eqref{eq:model2}, we can directly obtain that $Y^+=Y^*$, which shows $Y^*=[a_{1}\rho,a_{2}\rho,...,a_{m}\rho]$ is the equilibrium of system of~\eqref{eq:model2}.
	
	Now, we prove $Y^*$ is the equilibrium only if $Y^*$ is in this form for any $Y^*\in \{Y | \quad |X^+|=|\Dg (|Y|\mathbf{1}_{m})^{-1}YY^{\top}|> \mathbf{0} \}$. 
	Let $i=\arg\max_{k} |Y_{kj}|,$ we know that,
	\begin{equation*}
	|Y^*_{ij}|=\frac{|\sum_{k=1}^{n} X^+_{ik}Y^*_{kj}|}{||X^+_{i*}||_{1}} \leq \frac{\sum_{k=1}^{n} |X^+_{ik}||Y^*_{kj}|}{||X^+_{i*}||_{1}}\leq \max_{k} |Y^*_{kj}|.
	\end{equation*}
	Notice that, all entries of $X$ are non-zero under non-vanishing appraisal condition and $X$ is obviously sign-symmetric. In order to make $|Y^*_{ij}|=\max_{k} |Y^*_{kj}|$ hold, the following conditions must be satisfied:
	\begin{enumerate}
		\item [a)] $\sgn(X^+_{i*})=\sgn(X^+_{*i})=\pm \sgn(Y^*_{*j})$;
		
		\item [b)] All entries of $Y^*_{*j}$ have magnitude $||Y_{*j}||_{\infty}$.
	\end{enumerate}
	It implies that for any $k$ and $l$, there exist $\delta_{k}, \delta_{l} \in \{\pm 1\}$ such that $\sgn(Y^*_{*k})=\delta_{k} \sgn(X_{i*})$ and $\sgn(Y^*_{*l})=\delta_{l} \sgn(X_{i*})$.
	Thus we have $\sgn(Y^*_{*k})=\pm \sgn(Y^*_{*l})$ and $|Y^*_{ik}|=||Y_{*k}^*||_{\infty}$. It is equivalent that the equilibrium can only be in the form $Y^*=[a_{1}\rho,a_{2}\rho,...,a_{m}\rho]$. Clearly, $Y^*$ is not allowed to be a zero matrix. Thus there holds $\sum_{i=1}^{m} a_{i}^2\neq 0$ and hence completes the proof.
\end{proof}
\begin{remark} 
	If the opinion matrix converges to a fixed point
	$Y^*=[a_{1}\rho,a_{2}\rho,...,a_{m}\rho]$, then the appraisal matrix $X(t)$ also converges to some equilibrium $X^*$ such that $X^*=\frac{\sum_{k=1}^{m} a_{k}^2}{\sum_{k=1}^{m} |a_{k}|}\rho\rho^\top$, which satisfies social balance. Obviously, any $Y^*=[a_{1}\rho,a_{2}\rho,...,a_{m}\rho]$ leads to a unique $X^*$, but not the other way around. That is, some $X^*$ at the steady state can be derived from different equilibria $Y^*$. 
\end{remark}
\begin{remark}
	Given an equilibrium $Y^*=[a_{1}\rho,a_{2}\rho,...,a_{m}\rho]$, there could exists some $i$ such that $a_{i}=0$. For example, the initial condition $Y(0)=\left[ 
	\begin{matrix}
	1 & 2 & 5 \\
	-1 & -2 & 5 \\
	-1 & -2 & 5 \\
	1 & 2 & 5 
	\end{matrix}
	\right]$ leads to the equilibrium 
	\begin{equation*}
	  Y^* = \left[
	  \begin{matrix}
	   0 & 0 & 5 \\
	   0 & 0 & 5 \\
	   0 & 0 & 5 \\
	   0 & 0 & 5 
	  \end{matrix}
	  \right] \mathrm{~and~}
	  X^* = \left[
	  \begin{matrix}
	   5& 5 & 5 & 5 \\
	   5& 5 & 5 & 5 \\
	   5& 5 & 5 & 5 \\
	   5& 5 & 5 & 5
	  \end{matrix}
	  \right].
	\end{equation*}
Specifically, in single-issue case, since all entries of the appraisal matrix $X(t)$ are obviously non-zero, $Y^*=a\rho$ where $a \neq 0$ is the unique equilibrium for any initial condition $Y(0) \in \mathcal{S}_{nz-row}$.
\end{remark}

\begin{remark}
For the case $Y^* \notin \{Y | \quad |X^+|=|\Dg (|Y|\mathbf{1}_{m})^{-1}YY^{\top}|> \mathbf{0}_{n \times n} \}$ , $Y^*$ is an equilibrium of the interplay model if $Y^*=P\hat{Y}P^\top$, where $P$ is a permutation matrix and $\hat{Y} =\Dg(A_{1},\dots,A_{l})$ with  $A_{k}=[a_{1}\rho,a_{2}\rho,...,a_{m}\rho]$ and $l\geq 2$. The corresponding $X^*=P\hat{X}P^\top$, where $P$ is a permutation matrix and $\hat{X}$ is a block diagonal matrix with blocks of the form $\alpha \rho \rho^\top$ and at least $2$ blocks. $X^*$ doesn't satisfy social balance and non-vanishing appraisal condition. However, for any generic initial condition, with $99\%$ confidence level, the system does not converge to this equilibrium. This will be validated in the section \ref{sec:val} by Monte Carlo validation.
\end{remark}

Before investigating the property of convergence, we provide the formal definition of modulus sign-consensus and \emph{modulus consensus}.
\begin{definition} (Modulus sign-consensus)\label{Df:mod_agr}
	The opinion dynamics reaches modulus sign-consensus, if there exists a time point $t_{0}\geq 0$ such that the non-zero opinion matrix $Y(t)$ satisfies $\sgn(Y(t)_{a*})=\pm \sgn(Y(t)_{b*})$ for any $a,b \in \{1,2,\ldots,n\}$ and $t\geq t_{0}$. Moreover, the opinion reaches \emph{sign-consensus}, if $\sgn(Y(t)_{a*})= \sgn(Y(t)_{b*})$ for any $a,b \in \{1,2,\ldots,n\}$; The opinion reaches \emph{bipartite sign-consensus}, if there exist certain $a,b \in \{1,2,\ldots,n\}$ such that $\sgn(Y(t)_{a*})= -\sgn(Y(t)_{b*})$.
\end{definition}

\begin{definition} (Modulus consensus) \label{Df:mod_consensus}
	The opinion dynamics reaches modulus consensus if there exists a time point $t_{0} \geq 0$ such that the non-zero opinion matrix $Y(t)$ satisfies $Y(t)_{a*}=\pm Y(t)_{b*}$ for any $a,b \in \{1,2,\ldots,n\}$ and $t \geq t_{0}$. Moreover, the opinion reaches consensus if $Y(t)_{a*}= Y(t)_{b*}$ for any $a,b \in \{1,2,\ldots,n\}$; The opinion reaches \emph{bipartite consensus} if there exist certain $a,b\in \{1,2,\ldots,n\}$ such that $Y(t)_{a*} = Y(t)_{b*}$.
\end{definition}
Definition \ref{Df:mod_agr} shows two possible cases for modulus sign-consensus. The first case is that there are only one faction among all agents, in which all the agents support or are against each issue. The other case is that there are two factions among all agents, i.e., there exist at least two agents $a$ and $b$ such that $\sgn(Y_{a*})=- \sgn(Y_{b*})$. The agents reach sign-consensus within each faction while they are against others across the factions. 
	
Modulus consensus in Definition \ref{Df:mod_consensus} is stricter than modulus sign-consensus since the magnitudes of opinions are taken into consideration. Note that modulus consensus in Definition \ref{Df:mod_consensus} has been already introduced in previous literature \cite{altafini2012consensus,anton_tac_2016} for single-issue case. We generalize the definition for the multi-issue case, which better describes agents' attitudes towards a set of issues. For our system, we have confirmed that the equilibrium of opinion matrix in Theorem \ref{thm:fixed point multi} satisfies modulus consensus, since the condition $Y^*_{a*}=\pm Y^*_{b*}$ for any $a$ and $b$ is apparently fulfilled. We then provide the following proposition to characterize the properties of the opinion dynamics.
\begin{prop}[Properties of modulus sign-consensus]	\label{pro:pro2}
	Consider the system in~\eqref{eq:model2}. For any given $t_{0}\geq 0$, 
	if $\sgn(Y_{a*}(t_{0}))=\pm
	\sgn(Y_{b*}(t_{0}))$ for any $a$ and $b$, the following statements hold true.
	\begin{enumerate}
		\item [i)] the opinion reaches modulus sign-consensus and there holds $\sgn(Y(t))=\sgn(Y(t_0))$ for all $t \geq t_0$.
		\item [ii)] for any $t\geq t_{0}\geq 0$,  $|Y(t)|_{\min}$ is non-decreasing and lower bounds $|X(t+1)|$.
	\end{enumerate}
\end{prop}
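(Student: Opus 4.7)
The plan is to prove part (i) first by induction on $t \geq t_0$, since the sign-preservation property it establishes makes part (ii) a direct convex-combination bound. For the inductive step, I would factor the hypothesized sign pattern of $Y(t)$ as $\sgn(Y_{ak}(t)) = \eta_a s_k$ with $\eta_a \in \{\pm 1\}$ and $s_k \in \{-1,0,+1\}$. Plugging this into the update rule~\eqref{eq:Xentry} collapses the numerator of $X_{ij}(t+1)$:
\begin{equation*}
\sum_{k=1}^{m} Y_{ik}(t) Y_{jk}(t) \;=\; \eta_i \eta_j \sum_{k=1}^{m} |Y_{ik}(t)|\,|Y_{jk}(t)|,
\end{equation*}
which is sign-definite with sign $\eta_i\eta_j$ (and strictly non-zero because $Y(t)\in \mathcal{S}_{nz-row}$ together with the matching zero patterns of rows $i$ and $j$ ensures some term in the sum is positive). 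Hence $\sgn(X_{ij}(t+1)) = \sgn(W_{ij}(t+1)) = \eta_i \eta_j$, and a second substitution into~\eqref{eq:Yentry} yields
\begin{equation*}
Y_{ij}(t+1) \;=\; \eta_i s_j \sum_{k=1}^{n} |W_{ik}(t+1)|\,|Y_{kj}(t)|,
\end{equation*}
so $\sgn(Y_{ij}(t+1)) = \eta_i s_j = \sgn(Y_{ij}(t))$, completing the induction and yielding modulus sign-consensus for all $t \geq t_0$.

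Part (ii) then follows because the sign-alignment from part (i) makes every update cancellation-free. Taking absolute values of the two displayed identities gives
\begin{equation*}
|Y_{ij}(t+1)| \;=\; \sum_{k=1}^{n} |W_{ik}(t+1)|\,|Y_{kj}(t)| \quad\text{and}\quad |X_{ij}(t+1)| \;=\; \sum_{k=1}^{m}\frac{|Y_{ik}(t)|}{||Y_{i*}(t)||_{1}}\,|Y_{jk}(t)|.
\end{equation*}
Both right-hand sides are genuine convex combinations, since $\sum_k |W_{ik}(t+1)|=1$ and $\sum_k |Y_{ik}(t)|/||Y_{i*}(t)||_{1} = 1$. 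Each is therefore at least the minimum of the terms being averaged, which is in turn at least $|Y(t)|_{\min}$. Minimizing the first bound over $(i,j)$ yields the monotonicity $|Y(t+1)|_{\min} \geq |Y(t)|_{\min}$, while the second bound is exactly the claimed lower bound on $|X(t+1)|$.

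The main obstacle I foresee is pedantic rather than mathematical: the condition ``$\sgn(Y_{a*})=\pm \sgn(Y_{b*})$'' pins down only the non-zero signs, so I need to verify that the common zero pattern across rows is also preserved (it is, because $s_j=0$ forces $Y_{kj}(t)=0$ for every $k$ and hence $Y_{ij}(t+1)=0$). Once this invariance of the zero pattern is in hand, the denominators $||Y_{i*}(t)||_{1}$ and $||X_{i*}(t+1)||_{1}$ remain strictly positive along the trajectory, the dynamics stays well-defined throughout the induction, and the rest of the argument is routine algebra.
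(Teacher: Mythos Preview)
Your proposal is correct and follows essentially the same route as the paper: induction on $t$ for part~(i) using the fact that all cross terms in the update share a common sign, followed by cancellation-free convex-combination bounds for part~(ii). Your explicit factorization $\sgn(Y_{ak})=\eta_a s_k$ is a slightly cleaner bookkeeping device than the paper's pairwise $\delta_{il}=\pm 1$, but the underlying argument and the handling of zero columns are identical.
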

\begin{proof}
	By the model~\eqref{eq:model2}, there holds
	\begin{equation} \label{eq:prop_sgnY+}
	\begin{split}
	&\sgn (Y^+_{ij})=\sgn \left(\frac{\sum_{l=1}^{n}\sum_{k=1}^{m} Y_{ik}Y_{lk}Y_{lj}}{||X^+_{i*}||_{1}} \right)\\  
		& =\sgn \left(\frac{\sum_{k=1}^{n} X_{ik}Y_{kj}}{||X_{i*}||_{1}} \right) 
		=\sgn \left(\sum_{l=1}^{n}\sum_{k=1}^{m} Y_{ik}Y_{lk}Y_{lj} \right).
	\end{split}
	\end{equation}
	Firstly, we consider the case when $Y_{ij}(t_0) \neq 0$ for all $i \in \{1,2,\ldots,n\}, j \in \{1,2,\ldots,m\}$. We prove statement i) by induction. Clearly, there holds $\sgn(Y(t))=\sgn(Y(t_0))$ if $t = t_0$. Suppose $\sgn(Y(s))=\sgn(Y(t_0))$ for some $s > t_0$. It follows that $\sgn(Y_{a*}(s))=\pm \sgn(Y_{b*}(s))$ for any agents $a$ and $b$. Thus there exist $\delta_{il} = \pm 1$ such that 
	\begin{equation} \label{eq:prop_sgn_iklj}
	   \begin{aligned} 
	    &\sgn(Y_{ik}(s) Y_{lk}(s) Y_{lj}(s)) \\
	    &= \delta_{il} \sgn(Y_{lk}(s))^2 \sgn(Y_{lj}(s)) = \sgn(Y_{ij}(s)).
	   \end{aligned}
	\end{equation}
	In conjugation with~\eqref{eq:prop_sgnY+}, it follows that $\sgn(Y_{ij}(s +1))=\sgn(Y_{ij}(s))$ and hence $\sgn(Y(s +1))=\sgn(Y(s))$. By induction, one can obtain $\sgn(Y_{ij}(t+1))=\sgn(Y_{ij}(t_0))$ for all $t \geq t_0$, which is equivalent to statement i).
	
	Then if there exists certain $Y_{ap} = 0$, it yields that $Y_{*p}=\mathbf{0}_{n \times 1}$. Note that the equation~\eqref{eq:prop_sgn_iklj} still holds since its two sides both equal to zero. In this regard, statement i) is also true. 
	
	Now we prove statement ii). Since $\sgn(Y_{a*}(t))=\pm \sgn(Y_{b*}(t))$, we have
	\begin{equation*}
	\begin{split}
	|X^+_{ij}|&=\frac{| \sum_{k=1}^{m} Y_{ik}Y_{jk}|}{||Y_{i*}||_{1}} = \frac{ \sum_{k=1}^{m}|Y_{ik}||Y_{jk}|}{||Y_{i*}||_{1}}\geq \min_{k}|Y_{jk}|\\ &\geq |Y(t)|_{\min}.
	\end{split}
	\end{equation*}
	It follows that $\sgn(X_{il})=\sgn(Y_{i*}Y_{l*}^\top)=\delta_{il}$ and $\sgn(X_{ik})=\sgn(Y_{i*}Y_{k*}^\top)=\delta_{ik}$, where $\delta_{il},\delta_{ik} \in \{\pm 1\}$.
	
	Moreover, if all the entries of $Y_{*j}$ are non-zero, $\sgn(Y_{lj})=\delta_{il}\sgn(Y_{ij})$ and $\sgn(Y_{kj})=\delta_{ik}\sgn(Y_{ij})$. Since all the index of the above relationships are arbitrary, $\sgn(X_{i*})=\sgn(Y_{ij})\sgn(Y_{*j})=\pm \sgn(Y_{*j}),$ which implies 
	\begin{equation}
	\label{eq:eqsum}
	\left| \sum_{k=1}^{n} X^+_{ik}Y_{kj} \right|=\sum_{k=1}^{n} |X^+_{ik}||Y_{kj}|.
	\end{equation}
	Note that if all the entries of $Y_{*j}$ are zero, the equation~\eqref{eq:eqsum} still holds. Thus we can obtain
	\begin{equation*}
	\begin{split}
	|Y^+_{ij}|&=\frac{|\sum_{k=1}^{n} X^+_{ik}Y_{kj}|}{||X^+_{i*}||_{1}} = \frac{\sum_{k=1}^{n} |X^+_{ik}||Y_{kj}|}{||X^+_{i*}||_{1}}\\
	&\geq \min_{k}|Y_{kj}|\geq |Y(t)|_{\min}.
	\end{split}
	\end{equation*}
	This completes the proof of statement ii).
\end{proof}

Proposition~\ref{pro:pro2} implies that that the set of modulus sign-consensus opinion matrices is positively invariant along the dynamics~\eqref{eq:model2}. Moreover, if $Y(t_0)$ satisfies modulus sign-consensus, then $Y(t)\in \mathcal{S}_{nz-row}$ and $X(t)$ satisfies the non-vanishing condition for any $t\ge 0$. In addition, as Proposition~\ref{pro:pro2} implies, once the individuals form either one all-friendly faction or two antagonistic factions based on the signs of their opinions, they will stay in their factions from then on. 

The following theorem characterizes the relation between social balance of the appraisal network and modulus consensus of the opinions.
\begin{thm}[Social balance and modulus consensus]
	\label{thm:sta_mul}
	Given the model in~\eqref{eq:model2}, the following statements hold true
	\begin{enumerate}
		\item [a)] Any equilibrium in the form of $Y^*=[a_{1}\rho,a_{2}\rho,...,a_{m}\rho]$ where $a_{i}\neq 0, \forall i \in \{1,2,\ldots,n\}$ and $\rho \in \{ \pm 1\}^n$ is locally stable;  

		\item [b)] For any $Y_{0}\in \mathcal{Y}$, the following four statements are equivalent.
		\begin{enumerate}
			
			\item [i)] the solution of the appraisal matrix $X(t)$ satisfies the non-vanishing appraisal condition.
			
			\item [ii)] there exists $Y^*=[a_{1}\rho,a_{2}\rho,...,a_{m}\rho]$ such that $\lim_{t \rightarrow \infty}Y(t)=Y^*$ and $\lim_{t \rightarrow \infty}X(t)=X^*=\frac{\sum_{k=1}^{m} a_{k}^2}{\sum_{k=1}^{m} |a_{k}|}\rho\rho^\top,$ where $a\in \mathbb{R}$, $\rho \in \{\pm 1\}^n$, and $\sum_{i=1}^{n} a_{i}^2\neq 0$.
			
			\item [iii)] the opinion reaches modulus sign-consensus.
			
			\item [iv)] there exists $ t_{0} > 0$ such that $G(X(t+1))$ achieves social balance for any $t\geq t_{0} > 0$. 
		\end{enumerate}
	\end{enumerate}
\end{thm}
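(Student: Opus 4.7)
Part (a) reduces to the sign-freezing mechanism already established. Every entry of $Y^{*}=[a_{1}\rho,\ldots,a_{m}\rho]$ with $a_{i}\neq 0$ is nonzero with $\sgn(Y^{*}_{ij})=\sgn(a_{j})\rho_{i}$, so $\sgn(Y^{*}_{a*})=\pm\sgn(Y^{*}_{b*})$ holds trivially. By continuity of the sign map on nonzero entries, any initial condition $Y(0)$ in a sufficiently small neighbourhood of $Y^{*}$ inherits the full sign pattern of $Y^{*}$ and is therefore already in modulus sign-consensus in the sense of Definition~\ref{Df:mod_agr}. Proposition~\ref{pro:pro2}(i) then freezes $\sgn(Y(t))$ for all $t\geq 0$, while Lemma~\ref{Lm:lm2}(iii) and Proposition~\ref{pro:pro2}(ii) bound $|Y(t)_{ij}|$ from above and below by its initial-time extrema, trapping each entry of $Y(t)$ in a compact interval around $Y^{*}_{ij}$ and yielding Lyapunov stability.

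For part (b), the plan is to close the four statements via the easy implications $(iii)\Rightarrow(i)$, $(iii)\Rightarrow(iv)$, $(ii)\Rightarrow(iii)$, $(ii)\Rightarrow(iv)$, together with the nontrivial $(i)\Rightarrow(ii)$ and the bridge $(iv)\Rightarrow(iii)$. The easy pieces proceed as follows. For $(iii)\Rightarrow(i)$, modulus sign-consensus aligns the signs of $Y_{ik}Y_{jk}$ across $k$, collapsing $|X^{+}_{ij}|=\sum_{k}|Y_{ik}||Y_{jk}|/\|Y_{i*}\|_{1}$ into a weighted average bounded below by the non-decreasing envelope of Proposition~\ref{pro:pro2}(ii). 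For $(iii)\Rightarrow(iv)$, extracting the bipartition $\rho\in\{\pm 1\}^{n}$ from $\sgn(Y)$ yields $\sgn(X^{+}_{ij})=\rho_{i}\rho_{j}$ with $X^{+}_{ii}>0$, which is precisely social balance. For $(ii)\Rightarrow(iii)$ and $(ii)\Rightarrow(iv)$, convergence to $Y^{*}=[a_{1}\rho,\ldots,a_{m}\rho]$ and $X^{*}=c\rho\rho^{\top}$ with $c>0$ forces the sign patterns of $Y(t)$ and $X(t+1)$ to stabilize after finite time, and both conclusions follow.

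The main obstacle is convergence under $(i)$ or $(iv)$. For $(i)\Rightarrow(ii)$, the plan is a LaSalle-type argument: the non-vanishing hypothesis, combined with Lemma~\ref{Lm:lm2}(iii) and (v), confines $\Upsilon$ to a compact set and gives the uniform bound $|W_{ij}(t)|\geq|X(t)|_{\min}/(n|Y(0)|_{\max})$; Proposition~\ref{pro:pro1} supplies continuity of $f$ on $\Upsilon$, and the non-increasing functional $|Y(t)|_{\max}$ acts as a Lyapunov function whose accumulation points coincide with the equality case of Lemma~\ref{Lm:lm2}(iii). A case analysis identifies these with the balanced equilibria of Theorem~\ref{thm:fixed point multi}, and the uniform lower bound on $|W|$ then promotes subsequence convergence to full convergence. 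For $(iv)\Rightarrow(iii)$, the plan is a gauge transformation $\tilde{Y}_{ij}=\rho_{i}Y_{ij}$ with $\rho$ extracted from the socially balanced $\sgn(X(t+1))$: the update becomes $\tilde{Y}(t+1)=|W(t+1)|\tilde{Y}(t)$, a nonnegative row-stochastic iteration, and the pairwise inner-product signs imposed by social balance drive each column of $\tilde{Y}$ toward sign uniformity, which translates back to modulus sign-consensus on $Y$. The most delicate technical step is the equality-case analysis in the LaSalle argument, where the set of accumulation points must be pinned down without presupposing convergence and where zero-column configurations of Theorem~\ref{thm:fixed point multi} must be handled separately.
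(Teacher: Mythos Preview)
Your treatment of part (a) and of the implications $(i)\Rightarrow(ii)$, $(ii)\Rightarrow(iii)$, $(iii)\Rightarrow(iv)$, $(iii)\Rightarrow(i)$ in part (b) matches the paper's proof essentially line for line: Lyapunov stability via the sign-freezing of Proposition~\ref{pro:pro2} combined with the two-sided envelope from Lemma~\ref{Lm:lm2}(iii) and Proposition~\ref{pro:pro2}(ii); the LaSalle argument on the column-wise max norm for $(i)\Rightarrow(ii)$; and the elementary sign computations for the remaining arrows.

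The departure is in how you close the cycle back from $(iv)$. The paper does $(iv)\Rightarrow(i)$ in one stroke: social balance of $G(X(t+1))$ forces every entry of $X(t+1)$ to be nonzero, from which it infers the non-vanishing condition directly. You instead propose $(iv)\Rightarrow(iii)$ via the gauge $\tilde Y=D_{\rho}Y$, and here there is a genuine gap. First, you have not argued that the balance partition $\rho$ extracted from $\sgn(X(t+1))$ is time-independent; a priori each $X(t+1)$ could be balanced with a different $\rho(t)$, and then no single gauge exists to make the update nonnegative for all $t\ge t_{0}$. Second, even granting a fixed $\rho$ and the stochastic update $\tilde Y(t+1)=|W(t+1)|\tilde Y(t)$, social balance of $X(t+1)$ only tells you that the rows of $\tilde Y(t)$ have pairwise \emph{positive inner product}, which does not force their sign patterns to coincide (e.g.\ $\tilde Y_{1*}=(1,-\tfrac12)$, $\tilde Y_{2*}=(1,\tfrac12)$ already give an all-positive $X^{+}$ without modulus sign-consensus). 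Your claim that row-stochastic averaging ``drives each column toward sign uniformity'' in finite time is exactly the missing argument, and it is not obvious. The paper's direct $(iv)\Rightarrow(i)$ sidesteps this detour entirely.
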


\begin{proof}
	For a), let $\Delta = (\Delta_{ij})_{n \times m}$, where $\max_{i}|\Delta_{ik}|=\Delta_{k}<|a_{k}|$ and $Y(0)= Y^*+\Delta$.
	For any $t\geq 0,$ we have $\sgn(Y(t))=\sgn(Y(0))=\sgn(Y^*)$ and $0\leq |a_{k}|-\Delta_{k}\leq \min_{i}|Y_{ik}(t)|\leq \max_{i}|Y_{ik}(t)|\leq |a_{k}|+\Delta_{k}$.
	It implies that for any $i$ and $j$, we have $Y_{ij}(t)=\alpha_{ij} \sgn(Y_{ij}^*)$, where $|a_{k}|-\Delta_{k}\leq \alpha_{ij} \leq |a_{k}|+\Delta_{k}$. It follows that
	\begin{equation*}
	\begin{split}
	|Y(t)-Y^*|_{\max}&=\max_{ij}|\alpha_{ij}\sgn(Y_{ij}^*)-|a|\sgn(Y_{ij}^*)|\\
	&= \max_{ij}|\alpha_{ij}-|a||\leq \Delta_{k}\leq \Delta=\max_{k}\Delta_{k}
	\end{split}
	\end{equation*}
	Accordingly, for any $\epsilon >0$, there exists $\omega=\min(\max_{k}|a_{k}|/2,\epsilon/2)$ such that for any $Y(0)$ satisfying $|Y(0)-Y^*|_{\max}<\omega$, there holds $|Y(t)-Y^*|_{\max}<\epsilon$ for any $t\geq 0$. By definition, this shows that $Y^*$ is locally stable if $a_{i}\neq 0$ for any $i$.
	
	Now we prove that statement b) is true. Firstly, we show that statement i) implies ii). By Proposition \ref{pro:pro1}, $f(x)$ is continuous on the set $\Upsilon = \{Y(t)\}_{t=0}^{\infty}$. $\Upsilon$ is already a compact set, since $0\leq |Y_{ij}(t)| \leq |Y(0)|_{\max}$. Define $V(Y_{*k})=||Y_{*k}||_{\infty}$. $V$ is continuous on $\Upsilon$ similar to the proof in Proposition \ref{pro:pro1}. By statement iii) in Lemma \ref{Lm:lm2}, $V(Y^+_{*k})-V(Y_{*k})\leq 0$ for any  $Y\in \Upsilon \subset \mathcal{S}_{nz-row}$. According to the extended LaSalle invariance principle in the article \cite{mei2017lasalle}, we can obtain that, for any $Y_{0}\in \Upsilon$, $Y(t)$ converges to the largest invariant set $\mathcal{M}$ of the set $\mathcal{T}=\{  Y \in \Upsilon |  V(Y^+_{*k})-V(Y_{*k})= 0 \}$.
	
	Now we characterize the largest invariant set $\mathcal{M}$. For any $Y\in \mathcal{M}$, $V(Y^+_{*j})=V(Y_{*j})=||Y_{*j}||_{\infty}.$ Suppose $|Y^+_{ij}|=\max_{l}|Y^+_{lj}|.$ We have, 
	\begin{equation} \label{eq:thm2_y+}
	 \begin{aligned}
	  |Y^+_{ij}|&=\frac{|\sum_{k=1}^{n} X^+_{ik}Y_{kj}|}{||X^+_{i*}||_{1}} \leq \frac{\sum_{k=1}^{n} |X^+_{ik}||Y_{kj}|}{||X^+_{i*}||_{1}} \\
	  &\leq  \max_{k}|Y_{kj}|=||Y_{*j}||_{\infty}.
	 \end{aligned}
	\end{equation}
	Since $X(t)$ satisfies the non-vanishing condition, it implies that ${\min}_{i,j}|X_{ij}(t)|> 0$ when $t\rightarrow \infty$. Hence, all entries of $X$ are non-zero. Besides, since all the inequalities in~\eqref{eq:thm2_y+} must hold as equalities, it requires that $\sgn(X_{i*}^+)=\sgn(X_{*i}^+)=\pm \sgn(Y_{*j})$ and all entries of $Y_{*j}$ must have magnitude $||Y_{*j}||_{\infty}$. Thus we can conclude that $\mathcal{M}=\{  Y = Y^*=[a_{1}\rho,a_{2}\rho,...,a_{m}\rho],$ where $0\leq a_{i}\leq||Y(0)_{*i}||_{\infty}$ and not all $a_{i}$ are 0, for any $i$. By inserting $Y^*$ directly into the equation \eqref{eq:X}, we can obtain that $\lim_{t \rightarrow \infty}X(t)=X^*=\frac{\sum_{k=1}^{m} a_{k}^2}{\sum_{k=1}^{m} |a_{k}|}\rho\rho^\top.$

	Then we illustrate that statement ii) implies iii). Since $\lim_{t \rightarrow \infty}Y(t)=Y^*$ and $\sgn(Y_{a*}(t))=\pm \sgn(Y_{b*}(t))$ for any $a$ and $b$ after some time point $t_{0} >0$,
	there exists a neighbor set $\mathcal{U}(Y^*)$ such that $\sgn(Y)=\sgn(Y^*)$ for any $Y\in \mathcal{U}(Y^*)$. It follows that opinion reaches modulus sign-consensus. 

Here we prove that statement iii) can imply iv). Since the opinion reaches modulus sign-consensus, there exists $ t_{0} > 0$ such that $\sgn(Y_{a*}(t))=\pm \sgn(Y_{b*}(t))$ for any $a$ , $b$ and $t\geq t_{0}$. By proposition \ref{pro:pro2}, it holds that $\sgn(Y(t))=\sgn(Y(t_0))$ for all $t\geq t_{0}$. We show $G(X(t+1))$ satisfies social balance for any $t\geq t_{0}> 0$ by definition. There exist $\delta_{ij},\delta_{jl}, \delta_{li} \in  \{ \pm 1\}$ such that $\sgn(X^+_{ij})=\sgn(\frac{ \sum_{k=1}^{m} Y_{ik}Y_{jk}}{||Y_{i*}||_{1}})=\delta_{ij}$, $\sgn(X^+_{jl})=\sgn(\frac{ \sum_{k=1}^{m} Y_{jk}Y_{lk}}{||Y_{j*}||_{1}})=\delta_{jl}$, and $\sgn(X^+_{li})=\sgn(\frac{ \sum_{k=1}^{m} Y_{ik}Y_{lk}}{||Y_{l*}||_{1}})=\delta_{li}$.	
	It yields that $\sgn(X^+_{ij})\sgn(X^+_{jl})\sgn(X^+_{li})=\delta_{ij}\delta_{jl}\delta_{li}$. By listing all possible results of the above product, we obtain $\delta_{ij}\delta_{jl}\delta_{li}=1$. Hence we have $\sgn(X^+_{ii})=\frac{ \sum_{k=1}^{m} Y_{ik}Y_{ik}}{||Y_{i}||_{1}}>0$. It confirms that $G(X(t+1))$ satisfies social balance for any $t\geq t_{0}\geq 0$.

It remains to show that statement iv) can imply i). Since $G(X(t+1))$ achieves social balance for any $t\geq t_{0} > 0$, there holds that $\sgn(X(t+1)_{ij}) \sgn(X(t+1)_{jk}) \sgn(X(t+1)_{ki}) = 1$ $\forall$ $i$, $j$, $k \in \{1, \ldots, n\}$ for $t\geq t_{0}$. It follows that $|\sgn(X(t+1)_{ij})|=1$ for any $i$ and $j$. Thus there must exists $\delta > 0$ such that $|X_{ij}(t+1)|> \delta$ for each $t \geq t_{0}$. Therefore, $  {\mathrm{inf}~}  {\min}_{i,j}|X_{ij}(t+1)|\geq \delta> 0$ for each $t\geq t_{0}$. This completes the proof.
\end{proof}
Theorem \ref{thm:sta_mul} reveals the equivalence among non-vanishing condition of the appraisal matrix, social balance of the appraisal network, and modulus consensus of the opinion dynamics. 
Specifically, in single-issue case, the opinion matrix $Y(t)$ converges to $Y^*=\frac{||Y(0)||_{2}^2}{||Y(0)||_{1}}\sgn(Y(0))$ and the appraisal matrix $X(t)$ converges to $X^*=\frac{||Y(0)||_{2}^2}{||Y(0)||_{1}}\sgn(Y(0))\sgn(Y(0))^\top$ in one step for any $Y_{0} \in  \mathcal{S}_{nz-row}$. This can be proved by substituting the initial value directly into the system equations \eqref{eq:X}-\eqref{eq:Y}.
\begin{remark}
In this paper we do not introduce any assumptions on the logical connections between different issues. However, as a result of the interplay between appraisals and opinions, we find that individuals' opinions on different issues are eventually correlated with each other, and their opinions, are determined by which faction they are in. This feature reflects the so-called \emph{opinion partisanship} phenomenon in political science~\cite{DB-AG:08}, i.e., the correlation of issue attitudes with party identification.
\end{remark}

\section{Numerical Examples And Validation}
In this section, we illustrate our main results by numerical simulations. Besides, Monte Carlo simulations are conducted to 
support the validity of the non-vanishing condition. 
\subsection{Numerical Examples} \label{sec:num}
We use here the generic initial condition. By generic initial condition, it means that each entry of $Y_{0} \in  \mathcal{S}_{nz-row}$ is independently randomly generated from the uniform distribution on some support $[-a,a]$. According to the Lemma \ref{Lm:lm2}, our model is independent of scaling. In this regard, we decide to set $a=1$.

We characterize the general case with 9 agents and 6 issues. See Fig.~\ref{fig:vis} for the visualization of the evolution of the appraisal matrix and the opinion matrix, respectively. It can be observed that in $9$ steps, the appraisal matrix $X(t)$ achieves social balance and the opinion matrix $Y(t)$ reaches bipartite consensus and opinion partisanship. 
\begin{figure}
\centering
\includegraphics[width=0.99\linewidth]{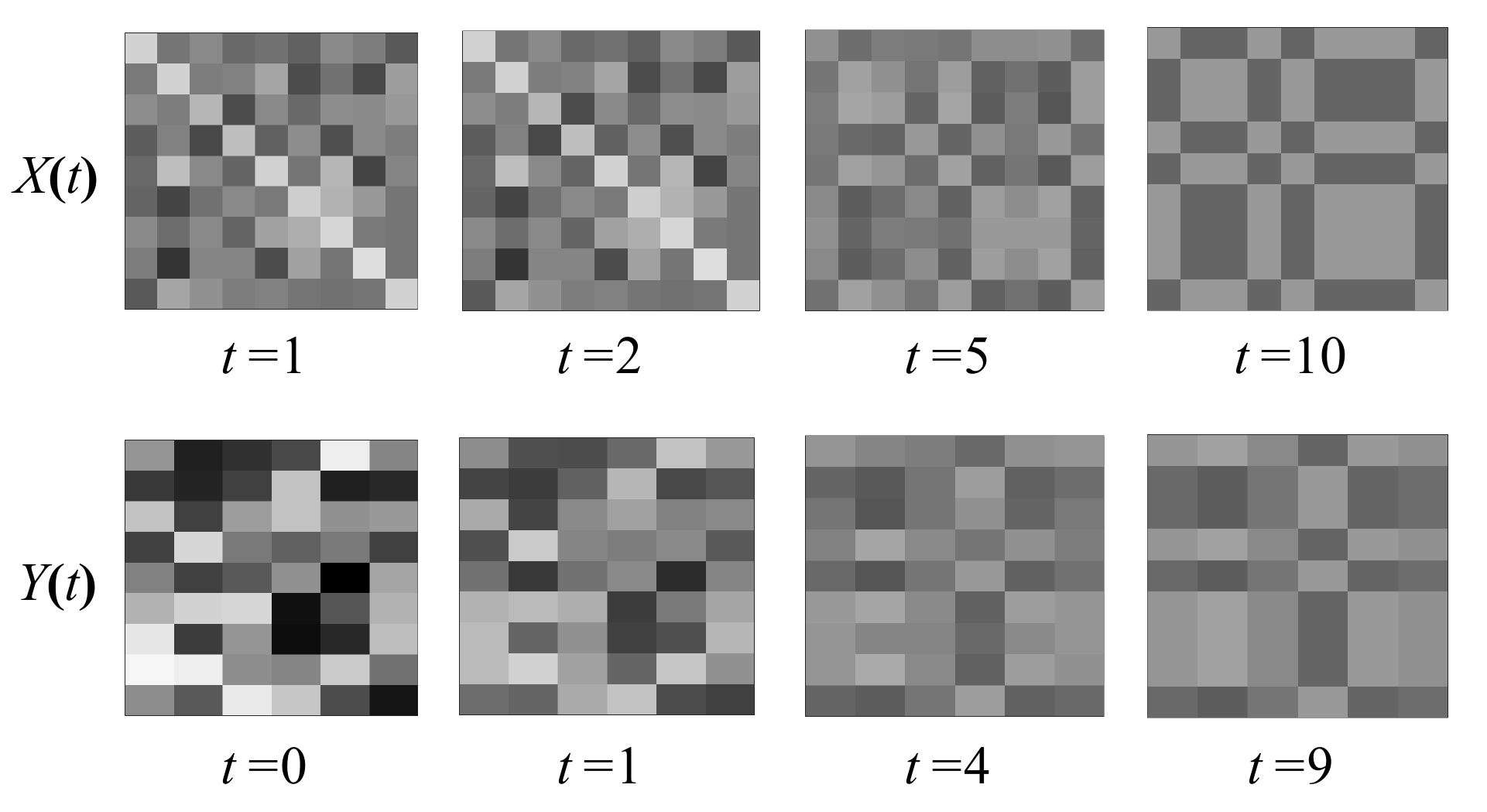}
\caption{Visualizations of the emergence of social balance and opinion partisanship in a group of 9 individuals discussing on 6 issues. In the visualized matrices, each grey block is one entry and the lower the entries value, the darker its color is. The first row of this figure shows that the appraisal network evolves to two antagonistic factions. The second row of this figure shows that, together with the appraisal dynamics, the individuals' opinions on different issues are eventually correlated with each other and dependent purely on what factions they are in.}
\label{fig:vis}
\end{figure}

\subsection{Numerical Validation of the Non-Vanishing Appraisal Condition} \label{sec:val}
We conduct Monte Carlo simulations to validate the non-vanishing appraisal condition for generic initial condition. The generic initial conditions are constructed by using the same technique in Section \ref{sec:num}. For any randomly generated generic initial condition $Y(0)$, the random variable $Z(Y(0))$ is defined by
\begin{equation}
Z(Y(0))=\left\{
\begin{array}{rcl}
1, & & {\text{if} \min\limits_{100\leq t \leq 1000}\min_{i,j}|X_{ij}(t)|\geq 0.001}\\
0, & & {\text{otherwise}} 
\end{array} \right.
\end{equation}
By running the simulation $N$ times independently, we can obtain samples $Z_1, Z_2, \ldots, Z_N$. It follows that the frequency of the occurrences of non-vanishing appraisal matrix is $\hat{p}_{N}=\sum_{i=1}^{N}Z_{i}/N$. Then the frequency $\hat{p}$ can be utilized to approximate the probability $p=\mathsf{Prob}(Z(Y(0))=1)$ with large $N$. Let $1-\xi \in (0,1)$ and $1-\xi \in (0,1)$ be accuracy and confidence level. The approximation error, i.e., $|\hat{p}_{N}-p|$, is bounded by $\epsilon$ with probability greater than $1-\xi$, if the Chernoff bound is fulfilled: $N \geq \frac{1}{2\epsilon^2}log\frac{2}{\xi}$. In this paper, by setting $\epsilon=\xi=0.01$ and conducting $N=27000$ simulations with 9 agents and 4 issues, we find that $\hat{p}=1$. In this context, we can conclude that, for any generic initial condition with $99\%$ confidence level, there is at least $99\%$ probability that every entry of $|X(t)|$ is greater than a small positive scalar ($0.001$), for all $t\in [100,1000]$.
Notice that $Y(t) \in \mathcal{S}_{nz-row}$ for $t=[1,999]$ holds if $X(1000)$ exists. Therefore, this simulation also validates that for any $Y$ randomly picked from $\mathcal{S}_{nz-row}$, $f(Y) \in \mathcal{S}_{nz-row}$ holds almost surely. 

\section{Conclusion}
In this paper, we propose a new discrete-time nonlinear model that characterizes the interplay between appraisal dynamics and opinion dynamics. In particular, the appraisal dynamics is based on person-entity homophily and the opinion dynamics is an influence-based Altafini-like model on coopetitive networks. Based on theoretical analysis and numerical experiments, the evolution of this interplay model show that the appraisal matrix reaches social balance and the opinion matrix achieves modulus consensus for almost any initial condition.

\bibliographystyle{IEEEtran}
\bibliography{bib}

\end{document}